\DeclareMathAlphabet{\mathpzc}{OT1}{pzc}{m}{it}
\def\cstar{C^{*}}
\def\comp{\mathbb{C}}
\newcommand{\beq}{\begin{equation}}
\newcommand{\eeq}{\end{equation}}
\newcommand {\cali}[1]{{\mathcal #1}}
\newcommand{\tr}{{\tt Tr}} 
\newcommand{\conj}[1]{\overline{#1}}
\newcommand{\pmat}{\begin{pmatrix}} 
\newcommand{\emat}{\end{pmatrix}} 
\newcommand{\real}{\mathbb{R}}
\newcommand{\norm}[1]{|\!|#1|\!|} 
\newcommand{\tensor}{\otimes}
\newcommand{\unit}{\mathbbm{1}}
\newcommand{\Sp}{{\tt sp}}
\newtheorem{propn}{Proposition}{}{}
\newtheorem{thm}{Theorem}{}{}
\newtheorem{lem}{Lemma}{}{}
{}{}
{}{}
{}{}
{}
\newcommand{\commentout}[1]{}
\newcommand{\be}{\begin{enumerate}}
\newcommand{\ee}{\end{enumerate}}
\newcommand{\bi}{\begin{itemize}}
\newcommand{\ei}{\end{itemize}}
\begin{document}
\title{An algebraic approach to information theory} 
\author{\IEEEauthorblockN{Manas K. Patra and Samuel L. Braunstein\\}  
\IEEEauthorblockA{Department of Computer Science, University of York, York YO10 5DD, UK}
}
\date{}
\maketitle
\begin{abstract}
This work proposes an algebraic model for classical information theory. We first give an algebraic model of probability theory. Information theoretic constructs are based on this model. In addition to theoretical insights provided by our model one obtains new computational and analytical tools. Several important theorems of classical probability and information theory are presented in the algebraic framework.
\end{abstract}
%{\keywords abelian $\cstar$ algebras, algebraic probability models, Shannon theory, channels, coding theorems}
\section{Introduction}
The present paper reports a brief synopsis of our work on an algebraic model of {\em classical} information theory based on operator algebras. Let us recall a simple model of a communication system proposed by Shanon \cite{Shannon48}. This model has essentially four components: source, channel, encoder/decoder and receiver. Some amount of noise affects every stage of the operation and the behavior of components are generally modeled as stochastic processes. In this work our primary focus will be on discrete processes. A discrete source can be viewed as a generator of a countable set of random variables. In a communication process the source generates sequence of random variables. Then it is sent through the channel (with encoding/decoding) and the output at the receiver is another sequence of random variables. Thus, the concrete objects or  {\em observables}, to use the language of quantum theory, are modeled as random variables. The underlying probability space is primarily used to define probability distributions or {\em states} associated with the relevant random variables. In the algebraic approach we directly model the observables. Since random variables can be added and multiplied \footnote{We assume that they are real or complex valued.}they constitute an {\em algebra}. This is our starting point. In fact, the algebra of random variables have a richer structure called a $\cstar$ algebra. Starting with a $\cstar$ algebra of observables we can define most important concepts in probability theory in general and information theory in particular. A natural question is: why should we adopt this algebraic approach? We discuss the reasons below. 

First, it seems more appropriate to deal with the ``concrete'' quantities, {\em viz}.\ observables and their intrinsic structure. The choice of underlying probability space is somewhat arbitrary as a comparison of standard textbooks on information theory \cite{CoverT,Ciszar} reveals. Moreover, from the algebra of observables we can recover particular probability spaces from representations of the algebra. Second, some constraints, may have to be imposed on the set of random variables. In security protocols  different participants have access to different sets of observables and may assign different probability structures. In this case, the algebraic approach seems more natural: we have to study different subalgebras. Third, the algebraic approach gives us new theoretical insights and computational tools. This will be justified in the following sections. Finally, and this was our original motivation, the algebraic approach provides the basic framework for a unified approach to classical {\em and} quantum information. All quantum protocols have some classical components, e.g.\ classical communication, ``coin-tosses'' etc. But the language of the two processes, classical and quantum, seem quite different. In the former we are dealing with random variables defined on one or more probability spaces where as in the latter we are processing quantum states which also give complete information about the measurement statistics of {\em quantum} observables. The algebraic framework is eminently suitable for bringing together these somewhat disparate viewpoints. Classical observables are simply elements that commute with every element in the algebra. 

The connection between operator algebras and information theory---classical {\em and} quantum---have appeared in the scientific literature since the beginnings of information theory and operator algebras---both classical and quantum (see e.g.\ \cite{Umegaki4, Segal60,araki75,Keyl,Beny,Kretschmann}). Most previous work focus on some aspects of information theory like the noncommutative generalizations of the concepts of entropy. There does not appear to be a unified and coherent approach based on intrinsically algebraic notions. The construction of such a model is one of the goals of the paper. As probabilistic concepts play such an important role in the development of information theory we first present an algebraic approach to probability. I. E. Segal \cite{Segal54} first proposed such an algebraic approach model of  probability theory. Later Voiculescu \cite{Voic} developed noncommutative or ``free probability'' theory. We believe several aspects of our approach are novel and yield deeper insights to information processes. In this summary, we have omitted most proofs or give only brief outlines. The full proofs can be found in our \href{http://arxiv.org/abs/0910.1536}{arXiv submission} \cite{Patra09}. A brief outline of the paper follows. 

In Section \ref{sec:algebra} we give the basic definitions of the $\cstar$ algebras.  This is followed by an account of probabilistic concepts from an algebraic perspective. In particular, we investigate the fundamental notion of independence and demonstrate how it relates to the algebraic structure. One important aspect in which our approach seems novel is the treatment of  probability distribution functions.  In Section \ref{sec:info} we give a precise algebraic model of information/communication system. The fundamental concept of entropy is introduced. We also define and study the crucial notion of a channel as a (completely) positive map. In particular, the {\em channel coding theorem} is presented as an approximation result. Stated informally: {\em Every channel other than the useless ones can be approximated by a lossless channel under appropriate coding}. We conclude the paper with some comments and discussions.  

\section{$C^*$ Algebras and Probability} \label{sec:algebra}
A Banach algebra $A$ is a complete normed algebra \cite{Rudin,KR1}. That is, $A$ is an algebra over real ($\real$) or complex numbers ($\comp$), for every $x\in A$ the norm $\norm{x}\geq 0$ is defined satisfying the usual properties and every Cauchy sequence converges in the norm. 
A $\cstar$ algebra $B$ is a Banach algebra\cite{KR1} with 
an anti-linear involution $^*$ ($x^{**}=x$ and $(x+cy)^*=x^*+\conj{c}y^*$, $x,y\in B$ and $c\in\comp$) such that 
\(\norm{xx^*}=\norm{x}^2\text{ and } (xy)^*=y^*x^*\forall x,y \in B\).
This implies that $\norm{x}=\norm{x^*}$. We often assume that the unit $I\in B$.  
The fundamental Gelfand-Naimark-Segal ({\bf GNS}) theorem states that 
every $\cstar$ algebra can be isometrically embedded in some $\cali{L}(H)$, the set of bounded operators on a Hilbert space of $H$. The spectrum of an element $x\in B$ is defined by $\Sp(x)=\{c\in \comp: x-cI \text{ invertible }\}$. The spectrum is a nonempty closed and bounded set and hence compact. 
An element $x$ is self-adjoint if $x=x^*$, normal if $x^*x=xx^*$ and positive (strictly positive) if $x$ is self-adjoint and $\Sp(x)\subset [0,\infty)((0,\infty))$. A self-adjoint
element has a real spectrum and conversely. Since $x=x_1+ix_2$ with $x_1=(x+x^*)/2$ and $x_1=(x+x^*)/2i$ any element of a $\cstar$ algebra can be decomposed into self-adjoint ``real'' and ``imaginary'' parts. 
The positive elements define a partial order on $A$: 
$x\leq y$ iff $y-x\geq 0$ (positive). A positive element $a$ has a unique square-root $\sqrt{a}$ such that $\sqrt{a}\geq 0\text{ and } (\sqrt{a})^2=a$. If $x$ is self-adjoint, $x^2\geq 0$ and $|x|=\sqrt{x^2}$.  A self-adjoint element $x$ has a decomposition $x=x_+-x_-$ into positive and negative parts where \(x_+=(|x|+x)/2\ \text{ and } x_-=(|x|-x)/2)\) are positive. An element $p\in B$ is a projection 
if $p$ is self-adjoint and $p^2=p$. Given two $\cstar$-algebras $A$ and $B$ a homomorphism $F$ is a linear map preserving the product and $^*$ structures. 
%It is continuous iff bounded. A continuous isomorphism of $\cstar$ algebras is an isometry (norm preserving). 
A homomorphism is positive if it maps positive elements to positive elements. A (linear) functional on $A$ is a linear map $A\rightarrow \comp$. A positive functional $\omega$ such that $\omega(\unit)=1$ is called a {\em state}. The set of states $G$ is convex. The extreme points 
are called {\em pure states} and $G$ is the convex closure of pure states (Krein-Millman theorem). A set $B\subset A$ is called a subalgebra if it is a 
$\cstar$ algebra with the inherited product. A subalgebra is called unital if it contains the identity of $A$.
Our primary interest will be on {\em abelian} or commutative algebras.  
The basic representation theorem (Gelfand-Naimark) \cite{KR1} states that: {\em An abelian $\cstar$ algebra with unity is isomorphic to the algebra $C(X)$ continuous complex-valued functions on a compact Hausdorff space $X$}. 

Now let $X=\{a_1, \dotsc, a_n\}$  be a finite set with discreet topology. Then $A=C(X)$ is the set of all functions $X\rightarrow \comp$. The algebra $C(X)$ can be considered as the algebra of (complex) random variables on the finite probability space $X$. Let $x_i(a_j)=\delta_{ij},\; i,j=1,\dotsc, n$. Here $\delta_{ij}=1 \text{ if }i=j \text{ and } 0$ otherwise. The functions $x_i\in A$ form a basis for $A$. Their multiplication table is particularly simple: $x_ix_j=\delta_{ij}x_i$. They also satisfy $\sum_i x_i=\unit$. These are projections in $A$. They are orthogonal in the sense that $x_ix_j=0\text{ for }i\neq j$. We call any basis consisting of elements of norm 1 with distinct elements orthogonal {\em atomic}. A set of linearly independent elements $\{y_i\}$ satisfying $\sum_i y_i=\unit$ is said to be complete. The next theorem gives us the general structure of any finite-dimensional algebra.  
\begin{thm}\label{thm:structFinite}
Let $A$ be a finite-dimensional abelian $\cstar$ algebra. Then there is a unique (up to permutations) complete atomic basis $\cali{B}=\{x_1, \dotsc, x_n\}$. That is, the basis elements satisfy
\beq \label{eq:structFinite}
x_i^*=x_i,\; x_ix_j=\delta_{ij}x_i,\;  \norm{x_i}=1 \text{ and }\sum_i x_i =\unit,\;
\eeq
Let $x=\sum_i a_ix_i\in A$. Then $\Sp(x)=\{a_i\}$ and hence $\norm{x}=\max_i\{|a_i|\}$. 
\end{thm}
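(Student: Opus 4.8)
The plan is to reduce the statement to the Gelfand--Naimark representation theorem and then extract everything by elementary algebra over the resulting idempotents. Assume as usual that $A$ is unital (a finite-dimensional $\cstar$ algebra carries a unit in any case). Since $A$ is abelian, Gelfand--Naimark gives an isometric $*$-isomorphism $\Phi\colon A \to C(X)$ for some compact Hausdorff $X$. As $\dim_\comp A = n$, the algebra $C(X)$ is $n$-dimensional; separating any finite family of distinct points of $X$ by Urysohn's lemma ($X$ being compact Hausdorff, hence normal) produces that many linearly independent continuous functions, so $|X| \le n$ and $X$ is finite. A finite Hausdorff space is discrete, whence $C(X) = \comp^{|X|}$ with pointwise operations and $|X| = n$. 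Writing $X = \{p_1,\dots,p_n\}$, the point indicators $\mathbf{1}_{\{p_i\}}$ form a basis of $C(X)$ that visibly satisfies the four relations in \eqref{eq:structFinite} (real-valued, hence self-adjoint; $\mathbf{1}_{\{p_i\}}\mathbf{1}_{\{p_j\}} = \delta_{ij}\mathbf{1}_{\{p_i\}}$; sup-norm $1$; sum equal to the constant function $1$). Pulling back, $x_i := \Phi^{-1}(\mathbf{1}_{\{p_i\}})$ is a complete atomic basis, because $\Phi$ preserves products, the involution, the unit, and the norm.

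The spectrum and norm of $x = \sum_i a_i x_i$ I would read off the same way: $\Phi(x)$ is the function $f$ with $f(p_j) = a_j$, so $\Sp(x) = \Sp(f) = f(X) = \{a_1,\dots,a_n\}$ (in $C(X)$ the spectrum of a function is its range, and isomorphisms preserve spectra), and $\norm{x} = \norm{f}_\infty = \max_i |a_i|$ by isometry of $\Phi$. One can equally stay inside $A$: $x - \lambda\unit = \sum_i (a_i - \lambda) x_i$ has inverse $\sum_i (a_i - \lambda)^{-1} x_i$ exactly when no $a_i$ equals $\lambda$, while $\lambda = a_k$ makes $x - \lambda\unit$ a zero divisor (multiply by $x_k \ne 0$); and $\norm{x}$ equals the spectral radius because $x$ is normal.

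For uniqueness, let $\{y_k\}_{k=1}^m$ be another complete atomic basis. Multiplying the relation $\sum_j y_j = \unit$ by $y_k$ and using orthogonality $y_k y_j = 0$ for $j \ne k$ gives $y_k^2 = y_k$. Expanding $y_k = \sum_i b_{ki} x_i$ in the basis just built, idempotency forces $b_{ki}^2 = b_{ki}$, so $b_{ki} \in \{0,1\}$ and $y_k = \sum_{i \in S_k} x_i$ for some $S_k \subseteq \{1,\dots,n\}$; in particular each $y_k$ is automatically self-adjoint. Orthogonality of the $y_k$ means the $S_k$ are pairwise disjoint, the completeness relation means they cover $\{1,\dots,n\}$, and $y_k \ne 0$ means each $S_k$ is nonempty. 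Being a basis, $m = \dim A = n$, and an $n$-element set partitioned into $n$ nonempty disjoint blocks consists of singletons only; hence each $y_k = x_{\sigma(k)}$ for a permutation $\sigma$.

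The one step I expect to require genuine care is the opening reduction --- arguing that a finite-dimensional $C(X)$ can only come from a finite, hence discrete, $X$, and verifying that the isometric $*$-isomorphism transports all four defining relations intact. Once the idempotent basis is in hand, the spectral computation and the uniqueness argument are routine linear algebra.
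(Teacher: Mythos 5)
Your proof is correct and follows essentially the route the paper itself sets up: reduce to the Gelfand--Naimark representation $A\cong C(X)$, observe that finite dimensionality forces $X$ to be finite and discrete so the point indicators give the complete atomic basis, and read off the spectrum and norm from the function picture; your uniqueness argument via idempotency forcing $\{0,1\}$ coefficients is the standard completion of that line. (The conference version defers the detailed proof to the arXiv companion, but your argument fills it in correctly along the intended lines.)
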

We next describe an important construction for $\cstar$ algebras. Given two $\cstar$ algebras $A$ and $B$, the tensor product $A\tensor B$ is defined as follows. As a set it consists of all finite linear combinations of symbols of the form $\{x\tensor y:x\in A,y\in B\}$ subject to the conditions that the map $(x,y)\rightarrow x\tensor y$ is bilinear in each variable. Hence, if $\{x_i\} \text{ and } \{y_j\}$ are bases for $A$ and $B$  respectively then $\{x_i\tensor y_j\}$ is a basis for $A\tensor B$. The linear space $A\tensor B$ becomes an algebra by defining 
$(x\tensor y)(u\tensor z)=xu\tensor yz$ and extending by bilinearity. 
The $*$ is defined by $(x\tensor y)^*=x^*\tensor y^*$ and extending {\em anti-linearly}. We will define the norm in a more general setting. Our basic model will be an {\em infinite} tensor product of finite dimensional $\cstar$ algebras which we present next.

\newcommand{\inftens}[1]{\bigotimes^{\infty}{#1}}
Let $A_k,\;k=1,2,\dotsc,$ be finite dimensional abelian $\cstar$ algebras with atomic basis $B_k=\{x_{k1},\dotsc,x_{kn_k}\}$. Let $B^{\infty}$ be the set consisting of all infinite strings of the form \(z_{i_1}\tensor z_{i_2}\tensor \cdots\) where all but a finite number ($>0$) of $z_{i_k}$s are equal to $\unit$ and if some $z_{i_k}\neq \unit$ then $z_{i_k}\in B_k$. 
\commentout{
Explicitly, $B^{\infty}$ consists of strings of the form $z_{i_1}\tensor z_{i_2}\tensor \cdots \tensor z_{i_k}\tensor\unit\tensor\unit\tensor\cdots,\;k=1,2,\dotsc $ and $z_i\in B$.} Let $\tilde{\mathfrak{A}}=\tensor_{i=1}^{\infty}{A}_i $ be the vector space with basis $B^{\infty}$ such that $z_{i_1}\tensor z_{i_2}\tensor \cdots \tensor z_{i_k}\tensor\cdots $ is linear in each factor separately. 
%%%%%%%%%%%%%%%%%%%%%%%%%%%%%%%%%%%%%%%%%%%%%%%%%%%%%
\commentout{  
\[
\begin{split}
&z_{1_1}\tensor\cdots \tensor (az_{i_k}+bz'_{i_k})\tensor z_{i_{k+1}}\tensor \cdots= \\
&a(z_{1_1}\tensor \cdots \tensor z_{i_k}\tensor z_{i_{k+1}}\tensor \cdots)+ 
b(z_{1_1}\tensor \cdots \tensor z'_{i_k}\tensor z_{i_{k+1}}\tensor \cdots).\\
\end{split}
\]

Clearly every $\alpha \in \tilde{\mathfrak{A}} $ is a finite linear combination of elements in $B^{\infty}$. }
We define a product in $\tilde{\mathfrak{A}}$ as follows. First, for elements of $B^{\infty}$: 
\((z_{i_1}\tensor z_{i_2}\tensor\cdots )(z'_{i_1}\tensor z'_{i_2}\tensor\cdots )=(z_{i_1}z'_{i_1}\tensor z_{i_2}z'_{i_2}\tensor\cdots )\)
We extend the product to whole of $\tilde{\mathfrak{A}}$ by linearity. Next define a norm  by:  
\(\norm{\sum_{i_1,i_2,\dotsc}a_{i_1i_2\cdots}z_{i_1}\tensor z_{i_2}\tensor \cdots }=\sup\{|a_{i_1i_2\cdots}|\}\).  $B^{\infty}$ is an atomic basis. It follows that $\tilde{\mathfrak{A}}$ is an abelian normed algebra. We define $*$-operation by 
\(\left(\sum_{i_1,i_2,\dotsc}a_{i_1i_2\cdots}z_{i_1}\tensor z_{i_2}\tensor \cdots \right)^*=\sum_{i_1,i_2,\dotsc}\conj{a_{i_1i_2\cdots}}z_{i_1}\tensor z_{i_2}\tensor \cdots \)
It follows that for $x\in \tilde{\mathfrak{A}}$, $\norm{xx^*}=\norm{x}^2$. Finally, we complete the norm \cite{KR1} and call the resulting $\cstar$ algebra $\mathfrak{A}$. With these definitions  $\mathfrak{A}$ is a $\cstar$ algebra. We call a $\cstar$ algebra $B$ of {\bf finite type} if it is either finite dimensional or infinite tensor product of finite-dimensional algebras. An important special case is when all the factor algebras $A_i=A$. We then write the infinite tensor product $\cstar$ algebra as $\inftens{A}$. Intuitively, the elements of an atomic basis $B^{\infty}$ of $\inftens{A}$ correspond to strings from an alphabet (represented by the basis $B$). Of particular interest is the 2-dimensional algebra $D$ corresponding to a binary alphabet. 
\commentout{
It can be shown that there are injective algebra maps---\(\cali{J}: \inftens{G}\rightarrow \inftens{A} \text{ and } \cali{J}': \inftens{A} \rightarrow \inftens{G}\). This is relevant for coding theory. }

The next step is to describe the state space. Given a $\cstar$ subalgebra $V\subset A$ the set of states of $V$ will be denoted by $\mathscr{S}(V)$. Let $\mathfrak{A}=\tensor^{\infty}_{i=1} A_i$ denote the infinite tensor product of finite-dimensional algebras $A_i$. An infinite product state of $\mathfrak{A}$ is a functional of the form
\( \Omega=\omega_1\tensor \omega_2\tensor\cdots \text{ such that }\omega_i\in \mathscr{S}(A_i)\)
This is indeed a state of $\mathfrak{A}$ for if $\alpha_k = z_1\tensor z_2 \tensor \cdots \tensor z_k\tensor\unit\tensor\unit\cdots\in \mathfrak{A}$ then 
\(\Omega(\alpha)=\omega_1(z_1)\omega_2(z_2)\cdots \omega_k(z_k), \)
a {\em finite} product. 
\commentout{
Since an arbitrary element of $\mathfrak{A}$ is the limit of sequence of finite sums of elements of the form $\alpha_k,\; k=1,2,\dotsc $ $\Omega$ is bounded by the principle of uniform boundedness. Clearly, it is positive.}
A general state on $\mathfrak{A}$ is a convex combination of product states like $\Omega$. 
Finally, we discuss another useful construction in a $\cstar$ algebra $A$. If $f(z)$ is an analytic function whose Taylor series $\sum_{n=0}^{\infty}a_n (z-c)^n$  converges in a region $|z-c|<R$. Then the series $\sum _{n=0}^{\infty}(x-c\unit)^n$ converges and it makes sense to talk of analytic functions on a $\cstar$ algebra. If we have an atomic basis $\{x_1,x_2,\dotsc \}$ in an abelian $\cstar$ algebra then the functions are particularly simple in this basis. Thus if $x=\sum_i a_ix_i$ then $f(x)=\sum_i f(a_i)x_i$ provided that $f(a_i)$ are defined in an appropriate domain. 
%We will mostly take this as our definition with the understanding that the constant function $c$ is identified with %$c\unit$. 

%\section{Algebraic approach to probability}\label{sec:prob}
We gave a brief description of $\cstar$ algebras. We now
 introduce an algebraic model of probability which is used later to model communication processes. 
In this model we treat random variables as elements of a $\cstar$ algebra. The probabilities are introduced via states. 
\commentout{
We emphasize again that random variables often  represent quantities that are actually measured or observed- the voltage across a resistor, the currents in an antenna, the position of a Brownian particle and so on. The probability distribution corresponds to the {\em state} of the devices that produce these outputs. We will take the alternative view and start with these observables as our basic objects.}
%manas:perhaps these statements go in the intro
%\subsection{Basic notions}\label{sec:prob1}
\newcommand{\scrp}[1]{{\mathscr #1}}
\newcommand{\intd}{\mathrm{d}}
A classical observable algebra is a complex abelian $\cstar$ algebra $A$. We can restrict our attention to real algebras whenever necessary. The Riesz representation theorem \cite{Rudin} makes it possible identify $\omega$ with some {\em probability measure}. 
%\noindent
%\begin{Def}
A {\em probability algebra} is a pair $(A, S)$ where $A$ is an observable algebra and $S\subset \scrp{S}(A)$ is a set of states. A probability algebra is defined to be {\em fixed} if $S$ contains only one state. 

\noindent
%\end{Def}
%%%%%%%%%%%%%%%%%%%%%%%%%%%%%%%%%%%%%%%%%%%%%%%%%%%%%%%%%%%%%%%%%%%%%%%%%%%%%%%%%%%%%%%%%%%%%%%%%%%%%%%%%%%%%%%%%%%%%%%%%%%%%%%%%%%%%%%%%%%%%%%%%%%%%%%%%%%%%%%%%%%%%%%%%%%%%%%%%%%%%%%
Let $\omega$ be a state on an abelian $\cstar$ algebra $A$. Call two elements $x,y\in A$ {\em uncorrelated in the state} $\omega$ if $\omega(xy)=\omega(x)\omega(y)$. This definition depends on the state: two uncorrelated elements can be correlated in some other state $\omega'$. A state $\omega$ is called multiplicative if $\omega(xy)=\omega(x)\omega(y)$ for all $x,y\in A$.  The set of states, $\mathscr{S}$, is convex. The extreme points of $\mathscr{S}$ are called {\em pure} states. In the case of abelian $\cstar$ algebras a state is pure if and only of it is multiplicative \cite{KR1}. Thus, in a pure state any two observables are uncorrelated. This is not generally true in the non-abelian quantum case. Now  we can introduce the important notion of {\em independence}. Given $S\subset A$ let $A(S)$ denote the subalgebra generated by $S$ (the smallest subalgebra of $A$ containing $S$). Two subsets $S_1,S_2\subset A$ are defined to be {\em independent} if all the pairs $\{(x_1,x_2): x_1\in A(S_1), x_2\in A(S_2)\}$ are uncorrelated. As independence and correlation depend on the state we sometimes write $\omega$-independent/uncorrelated. Independence is a much stronger condition than being uncorrelated. 
%%%%%%%%%%%%%%%%%%%%%%%%%%%%%%%%%%%%%%%%%%%%%%%%%%%%%%%%
\commentout{
However, in 2 dimensions $x\text{ and }x'$ are uncorrelated if and only if one of them is 0 or $c\unit$. Let us note that as in the quantum case two dimensions is an exceptional case.}
 The next theorem states the structural implications of independence. 
\begin{thm} \label{thm:structIndep}
Two sets of observables $S_1,S_2$ in a finite dimensional abelian $\cstar$ algebra $A$ are independent in a state $\omega$ if and only if for the subalgebras $A(S_1)$ and $A(S_2)$ generated by $S_1$ and $S_2$ respectively there exist states $\omega_1 \in \scrp{S}(A(S_1)),\; \omega_2\in \scrp{S}(A(S_2))$ such that $(A(S_1)\tensor A(S_2),\{\omega_1\tensor\omega_2\})$ is a cover of $(A(S_1S_2),\omega')$ where $A(S_1S_2)$ is the subalgebra generated by $\{S_1,S_2\}$ and $\omega'$ is the restriction of $\omega$ to $A(S_1S_2)$. 
\end{thm}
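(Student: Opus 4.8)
The plan is to realise the cover demanded by the theorem concretely as the \emph{multiplication map} $A(S_1)\tensor A(S_2)\to A$, and then to extract both implications by unwinding what ``cover'' means here: I take $(A',\{\mu'\})$ to cover $(A'',\{\nu\})$ to mean that there is a surjective $*$-homomorphism $\pi\colon A'\to A''$ with $\nu\circ\pi=\mu'$, and the relevant $\pi$ in this theorem is the canonical one. Write $B_1=A(S_1)$, $B_2=A(S_2)$, $C=A(S_1S_2)$, and let $\omega'=\omega|_C$. First I would check that, because $A$ is abelian, the bilinear map $(x_1,x_2)\mapsto x_1x_2$ on $B_1\times B_2$ factors through a linear map $\mu\colon B_1\tensor B_2\to A$ which is moreover a $*$-homomorphism: on elementary tensors $\mu\big((x_1\tensor x_2)(y_1\tensor y_2)\big)=x_1y_1x_2y_2=(x_1x_2)(y_1y_2)$ since $y_1x_2=x_2y_1$, and $\mu\big((x_1\tensor x_2)^*\big)=x_1^*x_2^*=(x_1x_2)^*$. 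Its image is the linear span of the products $x_1x_2$, which is a $*$-subalgebra (automatically norm closed in finite dimensions) containing both $S_1$ and $S_2$, hence equal to $C$; so $\mu$ is a surjective $*$-homomorphism onto $C$, and this is the covering map I use in both directions.

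For the ``only if'' part I would assume $S_1,S_2$ are $\omega$-independent, set $\omega_i=\omega|_{B_i}$, and verify $\omega'\circ\mu=\omega_1\tensor\omega_2$. Both sides are linear functionals on $B_1\tensor B_2$, so it suffices to compare them on the spanning elementary tensors: $\omega'\big(\mu(x_1\tensor x_2)\big)=\omega(x_1x_2)$ while $(\omega_1\tensor\omega_2)(x_1\tensor x_2)=\omega(x_1)\omega(x_2)$, and these agree for all $x_1\in B_1$, $x_2\in B_2$ by the very definition of independence (every such pair is uncorrelated in $\omega$). That $\omega_1\tensor\omega_2$ is genuinely a state follows from the coordinate picture of Theorem~\ref{thm:structFinite} — a tensor product of probability vectors is a probability vector — so $(B_1\tensor B_2,\{\omega_1\tensor\omega_2\})$ is a cover of $(C,\{\omega'\})$.

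For the ``if'' part I would start from the hypothesised cover, which supplies states $\omega_1,\omega_2$ with $\omega'\circ\mu=\omega_1\tensor\omega_2$, and evaluate this identity successively on $x_1\tensor\unit$, $\unit\tensor x_2$, and $x_1\tensor x_2$. The first two evaluations force $\omega_1=\omega|_{B_1}$ and $\omega_2=\omega|_{B_2}$, and the third then gives $\omega(x_1x_2)=\omega_1(x_1)\omega_2(x_2)=\omega(x_1)\omega(x_2)$ for every $x_1\in B_1$, $x_2\in B_2$, i.e.\ every pair from $A(S_1)\times A(S_2)$ is $\omega$-uncorrelated, which is exactly $\omega$-independence.

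I expect the only real friction to be the bookkeeping around the covering map: confirming that $\mu$ is well defined and a $*$-homomorphism (this is precisely, and essentially the only place, where commutativity of $A$ is used) and that its image is the full generated subalgebra $C$ rather than a proper one. Finite-dimensionality enters only for convenience — it makes the algebraic tensor product already a $\cstar$ algebra, makes the image of $\mu$ automatically closed, and yields positivity of $\omega_1\tensor\omega_2$ for free via Theorem~\ref{thm:structFinite} — so I do not anticipate a genuine obstacle there. One point I would be careful to spell out in the write-up is that, although the theorem only asserts the \emph{existence} of the states $\omega_1,\omega_2$, the cover condition leaves no freedom: it pins them down as the restrictions of $\omega$.
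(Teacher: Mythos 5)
The paper itself omits the proof of this theorem (it is deferred to the arXiv version), so there is no in-source argument to compare against line by line; judged on its own terms your proof is correct, and the route you take --- realising the cover by the multiplication $*$-homomorphism $\mu(x_1\tensor x_2)=x_1x_2$, which is well defined and multiplicative precisely because $A$ is abelian, and observing that the cover condition $\omega'\circ\mu=\omega_1\tensor\omega_2$ evaluated on elementary tensors is verbatim the statement that every pair in $A(S_1)\times A(S_2)$ is $\omega$-uncorrelated --- is the natural and surely the intended one. Your reading of ``cover'' (surjective $*$-homomorphism intertwining the states, taken to be the canonical $\mu$) is also the only one under which the ``if'' direction can work, and you are right that it forces $\omega_1,\omega_2$ to be the restrictions of $\omega$.

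One point you should make explicit rather than leave implicit: for the image of $\mu$ to be all of $A(S_1S_2)$ (your ``containing both $S_1$ and $S_2$'' step), and for the evaluations at $x_1\tensor\unit$ and $\unit\tensor x_2$ in the converse direction to make sense, you need the generated subalgebras to contain $\unit_A$. With the paper's stated convention that $A(S)$ is merely ``the smallest subalgebra containing $S$'' this is not automatic: in a finite-dimensional abelian algebra $A(S_i)$ has its own unit $p_i$, which may be a proper projection, in which case the span of the products $x_1x_2$ is $p_2A(S_1)+p_1A(S_2)+\cdots$, possibly a proper ideal-like piece of $A(S_1S_2)$, and the restriction $\omega|_{A(S_i)}$ need not be a state (one can have $\omega(p_i)<1$). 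The fix is trivial --- take the unital subalgebras generated by $S_1$, $S_2$, and $S_1\cup S_2$, i.e.\ adjoin $\unit$ to each generating set --- and nothing else in your argument changes, but as written the surjectivity claim has a small hole.
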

%We can even extend the theorem to infinite tensor product by restricting to finite segments. 
\commentout{
The next step is to extend the notion of independence to more than two subsets. Let $S_1,\dotsc,S_k\subset A$ and $\omega$ a state of $A$. Then the subsets are defined to be $\omega$-independent if for all $x_i\in A(S_i),\; i=1,\dotsc, k$ we have 
\[ \omega(x_1\cdots x_k)=\omega(x_1)\cdots \omega(x_k)\]
Here $A(S_i)$ is the subalgebra generated by $S_i$. We can then show that for states $\omega_i\in \scrp(A(S_i))$, the restriction of $\omega$ to $A(S_i)$ the pair 
\((A(S_1)\tensor\cdots \tensor A(S_k), \omega_1\tensor\cdots\tensor \omega_k)\) is a cover of $A(S_1\dotsc S_k),\omega'$, where $\omega'$ is the restriction of $\omega$ to $A(S_1\dotsc S_k)$, the algebra generated by $S_1,\dotsc, S_k$.}
We thus see the relation between independence and (tensor) product states in the classical theory. Next we show how one can formulate another important concept, {\em distribution function} ({\bf d.f}) in the algebraic framework. We restrict our analysis to $\cstar$ algebras of finite type. The general case is more delicate and is defined using approximate identities in subalgebras in \cite{Patra09}. The idea is that we approximate indicator functions of sets by a sequence of elements in the algebra. In the case of finite type algebras the sequence converges to a projection operator $J_S$. 
\commentout{
\noindent
{\bf Definition.} Let $S=\{x_1,x_2,\dotsc, x_n\}$ be a finite self-adjoint subset of $A$ where $(A,\omega)$ is a fixed probability algebra of finite type. For ${\tt t}=(t_1,t_2,\dotsc, t_n )\in \real$ let $S_{\tt t}\subset A$ denote the set of elements  $\{(t_i\unit - x_i):i=1,\dots, n\}$. Let $J_S$ be the identity in the (annihilator) subalgebra  $(S_{\tt t})_a\equiv \{ x\in A : xs=0\forall s\in S_{\tt t}\}$. Then the $\omega$-distribution of $S$ is defined to be the real function \( f_S({\tt t})= \omega(J_S) \). Using the d.f we can define the cumulative distribution function $F_S({\tt t})= \sum _{q_i} f_S({\tt t}: q_i\leq t_i)$. The sum is well defined since $f_S$ has only finitely many nonzero values. 

\noindent
We explain the rationale of this definition.} Thus, if we consider a representation where the elements of $A$ are functions on some finite set $F$ then $J_S$ is precisely the indicator function of the set $S'=\{c:x_i(c)-t_i=0:c\in F\text{ and }i=1,\dotsc, n\}$. The set $S'$ corresponds to the subalgebra $(S_{\tt t})_a$ and $J_S$, a projection in $A$, acts as identity in $(S_{\tt t})_a$. 
\commentout{
The following shows  the existence of $J_S$ by an explicit construction. Suppose $S$ contains a single element $x$. Writing $x=\sum_ia_ix_i$ in some atomic basis we have $x-t=\sum a_jP_j$ where $a_j\neq 0$ are distinct and $P_j$ are projections. Now use Lagrange interpolation to obtain polynomials $g_j$ such that $g_j(a_k)=\delta_{jk}$ and $g_j(0)=0$. Then $Q=\sum_j P_j$ is the required projection. If there are more elements in $S$ then let $Q_j$ be the projection for $x_j-t_j$ and $Q$ the product of $Q_j$'s. Then $J_s=\unit-Q$. Note that we have an explicit formula for distribution function that we can use it to prove properties of d.f. }
%Let $x=\sum_i a_i P_i$ be as above, with $a_i$ nonzero and distinct.  
%If $\omega$ is a state define \(\scrp{I}_{\omega}(x)= \sum_i \omega(P_i) P_i\),  a ``centroid'' of the possible outcomes of measurement of $x$. 
From the notion of distribution functions we can define now probabilities $Pr(a\leq x\leq b)$ in the algebraic context. We can now formulate problems in any discrete stochastic process in finite dimensions. The algebraic method actually provides practical tools besides theoretical insights as the example of ``waiting time'' shows \cite{Patra09}. 
%\be
%
%\subsection{Limit theorems}
Now we consider the algebraic formulation of a basic limit theorem of probability theory: the {\bf weak law of large numbers}. From information theory perspective it is perhaps the most useful limit theorem. Let $X_1,X_2,\cdots, X_n$ be independent, identically distributed (i.i.d) bounded random variables on a probability space $\Omega$ with probability measure $P$. Let $\mu$ be the mean of $X_1$. Recall the 
{\em Weak law of large numbers}. Given $\epsilon>0$ 
\[\lim_{n\rightarrow \infty}P(|S_n=\frac{X_1+\cdots +X_N}{n}-\mu |>\epsilon)=0\]
We have an algebraic version of this important result. 
\begin{thm} [Law of large numbers (weak)] \label{thm:weak-law} 
%{\bf Law of large numbers (weak).} 
If $x_1,\dotsc,x_n,\dotsc$ are  
$\omega$-\\independent self-adjoint elements in an observable algebra and $\omega(x_i^k)=\omega(x_j^k)$ for all positive integers $i,j\text{ and }k$ (identically distributed) then 
\[\lim_{n\rightarrow \infty} \omega(|\frac{x_1+\dotsb+x_n}{n}-\mu|^k)=0 \text{ where } \mu=\omega(x_1) \text{ and } k>0\]
\end{thm}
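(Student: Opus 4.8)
The strategy is to reduce the claim to a single even-integer moment estimate and then interpolate down to arbitrary $k>0$. \textbf{Centering.} Replacing each $x_i$ by $y_i=x_i-\mu\unit$ does not change the hypotheses: adjoining a scalar multiple of $\unit$ does not alter the generated subalgebras, so the $y_i$ remain $\omega$-independent, the binomial expansion shows $\omega(y_i^k)=\omega(y_j^k)$ for all $i,j$ and all $k$, and now $\omega(y_i)=0$. Writing $T_n=y_1+\dots+y_n$ we have $\frac{x_1+\dots+x_n}{n}-\mu=\frac1n T_n$, so it suffices to prove $\omega(|\frac1n T_n|^k)\to 0$ for every $k>0$.

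\textbf{A power-mean inequality.} For any self-adjoint $y$ in an abelian $\cstar$ algebra, $f\mapsto\omega(f(y))$ is a state on $C(\Sp(y))$, hence integration against a probability measure $\mu_y$ on the compact set $\Sp(y)\subset\real$ (Riesz); for finite-type algebras this is just the atomic-basis expansion of Theorem~\ref{thm:structFinite}, weighted by the probabilities $\omega(e_l)$. Thus $\omega(|y|^p)=\int|t|^p\,d\mu_y$, and since $\mu_y$ is a probability measure, Jensen's inequality gives $\omega(|y|^p)^{1/p}\le\omega(|y|^q)^{1/q}$ whenever $0<p\le q$. We shall use this both for $y=\frac1n T_n$ and for each $y=y_i$; note $\omega(y_1^{2m})<\infty$ since $y_1$ lies in a $\cstar$ algebra.

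\textbf{The even-integer moment.} Fix a positive integer $m$ and expand $\omega(T_n^{2m})=\sum_{j_1,\dots,j_{2m}=1}^{n}\omega(y_{j_1}\cdots y_{j_{2m}})$. Since $A$ is abelian, $y_{j_1}\cdots y_{j_{2m}}=\prod_v y_v^{m_v}$, where $m_v$ is the multiplicity with which the value $v$ occurs in $(j_1,\dots,j_{2m})$; by \emph{mutual} $\omega$-independence this factors as $\prod_v\omega(y_v^{m_v})$. Any multi-index in which some value occurs exactly once then carries a factor $\omega(y_v)=0$ and drops out, so only multi-indices in which every value occurs at least twice survive, and these use at most $m$ distinct values; hence there are at most $C(m)\,n^m$ of them. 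By the power-mean inequality each surviving factor obeys $|\omega(y_v^{m_v})|\le\omega(|y_v|^{m_v})\le\omega(y_v^{2m})^{m_v/2m}=\omega(y_1^{2m})^{m_v/2m}$, and since the exponents $m_v/2m$ sum to $1$, each surviving summand is at most $\omega(y_1^{2m})$ in absolute value. Therefore $|\omega(T_n^{2m})|\le C(m)\,\omega(y_1^{2m})\,n^m$, whence $\omega(|\frac1n T_n|^{2m})=n^{-2m}\omega(T_n^{2m})=O(n^{-m})\to 0$.

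\textbf{Interpolation and conclusion.} Given $k>0$, pick a positive integer $m$ with $2m\ge k$; the power-mean inequality then yields $\omega(|\frac1n T_n|^k)\le\omega(|\frac1n T_n|^{2m})^{k/2m}=O(n^{-k/2})\to 0$, which is the assertion. I expect the third step to be the main obstacle: one must use that ``$\omega$-independent'' here is the \emph{mutual} independence condition, so that $\omega$ of a product of several $y_v$ with distinct indices factors completely — this is what both annihilates the singleton terms and licenses the term-by-term estimate — and one must carry out the bookkeeping showing the number of non-vanishing multi-indices is $O(n^m)$. The remaining ingredients are just elementary convexity and the functional calculus supplied by Theorem~\ref{thm:structFinite}.
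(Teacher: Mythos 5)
Your proof is correct, and its core — centering, expanding $\omega(T_n^{2m})$ multinomially, killing every multi-index containing a singleton via mutual independence, and counting $O(n^m)$ survivors against the denominator $n^{2m}$ — is exactly the argument the paper gives for even integer $k$. Where you diverge is in passing from the even-moment case to general $k>0$: the paper approximates $|x|=(x^2)^{1/2}$ by polynomials in $x^2$ to handle $k=1$ and then invokes a Cauchy--Schwarz-type bound $\omega(|s_n|^{2r+1})\leq\omega(s_n^2)\,\omega(s_n^{2r})$ for odd powers (as written this inequality even has the wrong homogeneity, though the intent is clear), whereas you derive a Lyapunov/power-mean inequality $\omega(|y|^p)^{1/p}\leq\omega(|y|^q)^{1/q}$ from the Riesz representation of $f\mapsto\omega(f(y))$ on $C(\Sp(y))$ and interpolate downward from $2m\geq k$ in one stroke. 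Your route is cleaner and strictly more general: it covers non-integer $k$ (which the theorem's statement ``$k>0$'' nominally includes but the paper's proof does not address), it avoids the polynomial-approximation step for $|x|$, and your explicit bound $|\omega(y_v^{m_v})|\leq\omega(y_1^{2m})^{m_v/2m}$ on each surviving summand makes precise a term-size estimate the paper leaves implicit. You are also right to flag that \emph{mutual} (not merely pairwise) independence is what licenses the full factorization $\omega(\prod_v y_v^{m_v})=\prod_v\omega(y_v^{m_v})$; the paper uses this tacitly, and its displayed definition of independence covers only two subsets (the $k$-subset extension appears only in commented-out text).
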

\commentout{
\begin{proof}
We may assume $\mu=0$ (by reasoning with $x_i-\omega(x_i)$ instead of $x_i$). First we prove the statement for $k=2$. Then $\omega(|\frac{x_1+\dotsb+x_n}{n}|)^2=\sum_i\omega(x_i^2)/n^2=\omega(x_1^2)/n$. The first equality follows from independence ($\omega(x_ix_j)=\omega(x_i)\omega(x_j)=0\text{ for }i\neq j$) the second from the fact that they are identically distributed. The case $k=2$ is now trivial. Now let $k=2m$. Then $|x_1+\dotsb +x_n|^k=(x_1+\dotsb +x_n)^k$. Put $s_n=(x_1+\dotsb +x_n)/n$. Expanding $s_n^k$ in a multinomial series we note that  independence and the fact that $\omega(x_i)=0$ implies that all terms in which at least one of the $x_i$ has power 1 do not contribute to $\omega(s_n^k)$. The total number of the remaining terms is $O(n^{m})$. Since the denominator is $n^{2m}$ we see that $\omega(s_n^k)\rightarrow 0$. Since for any $x\in A$, $|x|=(x^2)^{1/2}$ can be approximated by polynomials in $x^2$ we conclude that $\omega(|s_n|)\rightarrow 0$. Finally, using the Cauchy-Schwartz type inequality $\omega(|s_n|^{2r+1})\leq \omega(s_n^2)\omega(s_n^{2r})$ we see that the theorem is true for all $k$. 
\end{proof}
}
Using the algebraic version of Chebysev inequality the above result implies the following. 
Let $x_1,\dotsc, x_n \text{ and } \mu$ be as in the Theorem and set $s_n=(x_1+\dotsb+x_n)/n$. Then for any $\epsilon >0$ there exist $n_0$ such that for all $n>n_0$ 
\(
P(|s_n-\mu|>\epsilon) <\epsilon 
\)
%\end{cor_t}
\commentout{
\begin{proof}
Using Chebysev inequality we have \( P(|s_n-\mu|>\epsilon)= P(|s_n-\omega(s_n)|>\epsilon) \leq \frac{\omega(|s_n-\mu|^2)}{\epsilon^2}\). As $\omega(|s_n-\mu|^2)\rightarrow 0$ (Theorem \ref{thm:weak-law}) there is $n_0$ such that $\omega(|s_n-\mu|^2)<\epsilon^3$ for $n>n_0$. 
\end{proof}
}
\section{Communication and Information}\label{sec:info}
We now come to our original theme: an algebraic framework for communication and information processes. Since our primary goal is the modeling of information processes we refer to the simple model of communication in the Introduction and model different aspects of it. In this work we will only deal with sources with a finite alphabet. 
%\subsection{Source and coding}

\noindent
{\bf Definition.} {\em A source  is a pair $\scrp{S}=(B,\Omega)$ where $B$ is an atomic basis of a finite-dimensional abelian $\cstar$ algebra $A$ and $\Omega$ is a state in $\inftens{A}$}.
 
\noindent
This definition abstracts the essential properties of a source. The basis $B$ is called the {\em alphabet}. A typical output of the source is of the form $x_1\tensor x_2\tensor \dotsb\tensor x_k\tensor \unit \tensor \dotsb \in B^{\infty}$, the infinite product basis of $\inftens{A}$. We identify \(\hat{x}_k= \unit\tensor\dotsb\tensor\unit\tensor x_k\tensor\unit\tensor\dotsb\) with the $k$th signal. If these are independent 
then Theorem \ref{thm:structIndep} tells us that $\Omega$ must be product state. Further, if the state of the source does not change then $\Omega=\omega\tensor\omega\tensor\dotsb$ where $\omega$ is a state in $A$. 
For a such state $\omega$ define: 
\(\cali{O}_{\omega}=\sum_{i=1}^n \omega(x_i)x_i, \; \{x_1, \dotsc, x_n\},\; x_i\in B\)
We say that $\cali{O}_{\omega}$ is the ``instantaneous'' output of the source in state $\omega$. 
\commentout{
Intuitively, $\cali{O}_{\omega}$ is a kind of mean ``point'' in the space of outputs (compare it with the notion of center of mass in mechanics). More importantly, it facilitates calculation of important quantities and has close analogy with the quantum case. The quantum analogue may be pictured as follows. The source outputs ``particles'' in definite ``states'' $x_i$ with probability $p_i=\omega(x_i)$. Note that here state corresponds to a projection operator. A measurement for $x_i$ means applying the dual operator $\omega_i\;(\omega_i(x_j)=\delta_{ij})$ giving $\omega_i(\cali{O}_\omega)= p_i$.

Let $\mathscr{Z}=(X,\omega)$ be a static discrete source. Suppose every $x\in X$ belongs to a finite-dimensional subalgebra generated by a (finite) set of $\omega$-independent elements. Then using the Theorem \ref{thm:structIndep} we may assume that $A=\inftens B$ where $B$ is finite-dimensional abelian $\cstar$ algebra and $\omega$ is an (infinite) product state. In this case, each element of $X$ is a tensor product of elements of an atomic basis of $B$. In the rest of the paper we assume that $X$ is the product basis of atomic elements. For example, if $B$ is the two dimensional algebra with atomic basis $\{y_0,y_1\}$ then $X$ is the set of elements of the form 
$z_1\tensor z_2\tensor\dotsb \tensor z_k\tensor\unit\tensor\unit\tensor\dotsb$ where $z_i\in\{y_0,y_1\}$. 
%\subsection{Source coding}
Let $B$ be a finite-dimensional $\cstar$ algebra and $A=\inftens B$ . We consider $\tensor^n B$ as a subalgebra of $A$ via the standard embedding (all ``factors'' beyond the $n$th place equal $\unit$). Let $X_n$ be its atomic basis in some fixed ordering and let $X=\bigcup_n X_n$. We can consider $B$ as the source alphabet and $X_n$ as strings of length $n$.} Let $A'$ be another finite-dimensional $\cstar$ algebra with atomic basis $B'$
 A source coding is a linear map $f:B\rightarrow  T= \sum_{k= 1}^m \tensor^k A' $. Such that for $x\in B$, $f(x)=x'_{i_1}\tensor x'_{i_2}\tensor \dotsb\tensor x'_{i_r},\; r\leq k$ with $x'_{i_j}\in B'$. Thus each ``letter'' in the alphabet $B$ is coded by ``words'' of maximum length $k$ from $B'$. 

%%%%%%%%%%%%%%%%%%%%%%%%%%%%%%%%%%%%%%%%%%%%%%
\commentout{
Let us consider an example to clarify these points. 
Let $\{x_0,x_1,x_2,x_3\}$ be an atomic basis for $B$. Let $B'=G$ with atomic basis $\{y_0,y_1\}$. Define $f_1$ by $f_1(x_0)=y_0,f_1(x_1)= y_1, f_1(x_2)=y_0\tensor y_1 \text{ and } f_1(x_3)=y_1\tensor y_0$. Denote by $\hat{f}_1$ its extension to tensor products. Since $\hat{f}_1(x_0\tensor x_1)=y_0\tensor y_1=\hat{f}_1(x_2)$, $\hat{f}_1$ is not injective. Hence it cannot be inverted on its range. Consider next the map $f_2(x_0)=y_0,f_2(x_1)=y_0\tensor y_1,f_2(x_2)=y_0\tensor y_1\tensor y_1\text{ and }f_2(x_3)=y_1\tensor y_1\tensor y_1$. This map is invertible but one has to look at the complete product before finding the inverse. It is not {\em prefix-free}.}
%%%%%%%%%%%%%%%%%%%%%%%%%%%%%%%%%%%%%%%%%%%%%%%%%%%%%%%%%
A code $f:B\rightarrow T$ is defined to be prefix-free if for distinct members  $x_1,x_2$ in an atomic basis of $B$, $f'(x_1)f'(x_2)=0$ where $f'$ is the map $f': B\rightarrow \inftens B'$ induced by $f$. That is, distinct elements of the atomic basis of $B$ are mapped to {\em orthogonal} elements. Thus the
``code-word'' $ z_1\tensor z_1\tensor\dotsb \tensor z_k \tensor \unit\tensor \unit\tensor\dotsb$ is not orthogonal to another $ z'_1\tensor z'_1\tensor\dotsb \tensor z'_m \tensor \unit\tensor \unit\tensor\dotsb$ with $k\leq m$ if and only if $z_1=z'_1,\dotsc, z_k=z'_k$. 
\commentout{
We observe that one has to be careful about correspondence between the two approaches. For example, one might be tempted to identify the identity $\unit$ with the empty string but the $\unit$ is the sum of the members of an atomic basis! The binary operation ``+'' has a relatively lesser role in the classical formalism but it is crucial in the quantum framework (via superposition principle).}
The useful Kraft inequality can be proved using algebraic techniques. 
%\begin{lem}[Kraft inequality]\label{lem:Kraft}Let $B$ be an $n$-dimensional abelian $\cstar$ algebra. 
Corresponding to a finite sequence $k_1\leq k_2\leq \dotsb \leq k_m$ of positive integers let $\alpha_1,\dotsc, \alpha_m$ be a set of prefix-free elements in $\sum_{i\geq 1} \tensor^i A'$ such that $\alpha_i\in \tensor^{k_i} A'$. Further, suppose that each $\alpha_i$ is a tensor product of elements from $B'$. Then 
\beq \label{eq:Kraft}
 \sum_{i=1}^m n^{k_m-k_i} \leq n^{k_m}
\eeq
%\end{lem}
This inequality is proved by looking at bounds on dimensions of a sequence of orthogonal subspaces.   
\commentout{
The Kraft inequality is valid for decipherable sequences \cite{McMillan}. However, the proof is essentially combinatorial. The Kraft inequality also provides a sufficiency condition for prefix-free code \cite{Ciszar,CoverT}. Thus the existence of a decipherable code of word-lengths $(k_1,k_2,\dotsc,k_m)$ implies the existence of a prefix-free code of same word-lengths.}
In the following, we restrict ourselves to prefix-free codes. 
%%%%%%%%%%%%%%%%%%%%%%%%%%%%%%%%%%%%%%%%%%%%%%%%%%%%%%%%%%%%%%%%%%
\commentout{
\begin{lem}
Let $f$ be a continuous real function on $(0,\infty )$ such that $xf(x)$ is convex and $\lim_{x\rightarrow 0} xf(x)=0$. Let $A$ be a finite-dimensional $\cstar$ algebra with atomic basis $\{x_1,\dotsc, x_n\}$ and $\omega$ a state on $A$. Then for any set of numbers $\{a_i\; :i=1,\dotsc, n;\; a_i>0\text{ and } \sum_i a_i\leq 1\}$ we have 
\(\omega(\sum_i f(\frac{\omega(x_i)}{a_i})x_i) \geq f(1)\)
\end{lem}

\begin{proof}
Let $\omega(x_i)=p_i$. We have to show that $\sum p_if(p_i/a_i) \geq f(1)$. First assume that all $p_i>0$ and $\sum_i a_i=1$. Then 
\[ \sum_i p_i f(p_i/a_i) =\sum_i a_i\frac{p_i }{a_i}f(\frac{p_i }{a_i}) \geq f(\sum p_i)=f(1)\]
by convexity of $xf(x)$. The general case can be proved by starting with $a_i$ corresponding to $p_i>0$ and adding extra $a_j$'s to satisfy $\sum_i a_i=1$ if necessary. The corresponding $p_j$ is set to $0$. Now define a new function $g(x)=xf(x), \; x>0$ and $g(0)=0$. The conclusion of the lemma follows by arguing as above with $g$.   
\end{proof}
}
Using convexity function $f(x)=-\log {x}$ and the Kraft inequality \ref{eq:Kraft} we deduce the following. 
\begin{propn}[Noiseless coding]
Let $\mathscr{S}$ be a source with output $\cali{O}_{\omega}\in A$, a finite-dimensional $\cstar$ algebra with atomic basis $\{x_1,\dotsc, x_n\}$ (the alphabet). Let $g$ be prefix-free code such that $g(x_i)$ is a tensor product of $k_i$ members of the code basis. Then 
\(\omega(\sum_i k_ix_i+\log{\cali{O}_{\omega}})\geq 0\)
\end{propn}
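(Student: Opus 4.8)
The plan is to unwind the algebraic notation until the assertion reduces to the classical source-coding lower bound (mean code-word length $\geq$ entropy), and then to close using the Kraft inequality \eqref{eq:Kraft} together with the convexity of $t\log t$.

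First I would set $p_i=\omega(x_i)$, so that $p_i\geq 0$ and $\sum_i p_i=\omega(\unit)=1$. By definition $\cali{O}_\omega=\sum_i p_i x_i$, and by Theorem~\ref{thm:structFinite} its spectrum is $\{p_i\}$; assuming provisionally that all $p_i>0$, the functional calculus in the atomic basis (the identity $f(\sum_i a_i x_i)=\sum_i f(a_i)x_i$ recalled before Theorem~\ref{thm:structFinite}) gives $\log\cali{O}_\omega=\sum_i(\log p_i)x_i$. Hence $\sum_i k_i x_i+\log\cali{O}_\omega=\sum_i(k_i+\log p_i)x_i$, and applying the linear functional $\omega$, with $\omega(x_i)=p_i$, yields $\omega\bigl(\sum_i k_i x_i+\log\cali{O}_\omega\bigr)=\sum_i p_i k_i+\sum_i p_i\log p_i$. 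So the claim is exactly that the mean code-word length $\sum_i p_ik_i$ is at least the Shannon entropy $-\sum_i p_i\log p_i$, where $\log$ is taken to base $n:=\dim A'=|B'|$ (the convention matching the code alphabet).

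Next I would bring in Kraft: since $g$ is prefix-free and each $g(x_i)$ is a tensor product of $k_i$ elements of $B'$, inequality \eqref{eq:Kraft} applies and, after dividing by the appropriate power of $n$, gives $\sum_i n^{-k_i}\leq 1$. Put $a_i=n^{-k_i}\in(0,1]$, so that $\sum_i a_i\leq 1$ and $k_i=-\log a_i$. Substituting, $\sum_i p_ik_i+\sum_i p_i\log p_i=\sum_i p_i\log(p_i/a_i)$. By concavity of $\log$ (Jensen), $\sum_i p_i\log(a_i/p_i)\leq\log\bigl(\sum_i a_i\bigr)\leq\log 1=0$, hence $\sum_i p_i\log(p_i/a_i)\geq 0$, which is the desired inequality.

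The one place needing care is the degenerate case where some $p_i=0$: then $\cali{O}_\omega$ is not invertible and $\log\cali{O}_\omega$ is strictly speaking undefined, so one first passes to the subalgebra spanned by $\{x_i:p_i>0\}$ — on which the restricted code still satisfies \eqref{eq:Kraft} — and reads the statement there. I expect this bookkeeping, together with fixing the base of the logarithm as $n=|B'|$, to be the only real friction; the two substantive inputs, the Kraft inequality and the convexity of $t\log t$, are already available.
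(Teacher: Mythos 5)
Your proposal is correct and follows essentially the same route as the paper: reduce the assertion to $\sum_i p_i k_i + \sum_i p_i\log p_i\geq 0$ via the atomic-basis functional calculus, invoke the Kraft inequality to get $\sum_i n^{-k_i}\leq 1$, and finish with a Gibbs/Jensen-type convexity argument (the paper packages this last step as a lemma on functions $f$ with $xf(x)$ convex, applied to the logarithm, but the content is identical). Your handling of the base of the logarithm and of vanishing $p_i$ matches the paper's own bookkeeping.
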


Next we give a simple application of the law of large numbers. First define a positive functional $\tr$ on a finite dimensional abelian $\cstar$ algebra $A$ with an atomic basis $\{x_1,\dotsc, x_d\}$ by 
\(\tr =\omega_1+\dotsb+\omega_d\) where $\omega_i$ are the dual functionals. It is clear that $\tr$ is independent of the choice of atomic basis.  
\begin{thm}[Asymptotic Equipartition Property (AEP)]\label{thm:AEP}
Let $\scrp{S}$ be a source with output $\cali{O}_{\omega}=\sum_{i=1}^d \omega(x_i)x_i$ where $\omega$ is a state on the finite dimensional algebra with atomic basis $\{x_i\}$. Then given $\epsilon >0$ there is a positive integer $n_0$ such that for all $n>n_0$ 
\[ P(2^{n(H(\omega)-\epsilon)}\leq \tensor^n \cali{O}_{\omega}\leq 2^{n(H(\omega)+\epsilon)}) > 1- \epsilon \]
where $H=\omega(\log_2(\cali{O}_{\omega}))$ is the {\em entropy} of the source and the probability distribution is calculated with respect to the state $\Omega_n=\omega\tensor\dotsm\tensor\omega$ ($n$ factors) of $\tensor^n A$. If $Q$ denotes the identity in the subalgebra generated by $(\epsilon I-|\log_2(\tensor^n\cali{O}_\omega)+nH|)_+$ then 
\[(1-\epsilon)2^{n(H(\omega)-\epsilon)}\leq \tr(Q) \leq 2^{n(H(\omega)+\epsilon)} \]
\end{thm}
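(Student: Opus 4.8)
The plan is to derive the whole statement from the algebraic weak law of large numbers (Theorem~\ref{thm:weak-law}) applied to the self-adjoint element $z=\log_2\cali{O}_{\omega}\in A$ copied into successive tensor slots. Write $p_i=\omega(x_i)$; the atoms with $p_i=0$ receive weight $0$ under every state $\Omega_n=\omega\tensor\dots\tensor\omega$, so we discard them and assume all $p_i>0$. Put $z=\log_2\cali{O}_{\omega}=\sum_i(\log_2 p_i)x_i$, a bounded self-adjoint element, and for $k\ge 1$ let $\hat z_k\in\inftens{A}$ carry $z$ in the $k$-th factor and $\unit$ elsewhere. Since $\tensor^n\cali{O}_{\omega}$ has eigenvalue $\prod_k p_{i_k}$ on the atomic basis element $x_{i_1}\tensor\dots\tensor x_{i_n}$ of $\tensor^n A$, the functional calculus gives $\hat z_1+\dots+\hat z_n=\log_2(\tensor^n\cali{O}_{\omega})$. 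For the product state $\Omega=\omega\tensor\omega\tensor\dots$ the $\hat z_k$ are $\Omega$-independent (a product state factorizes over disjoint tensor slots) and identically distributed ($\Omega(\hat z_k^m)=\omega(z^m)$ for all $m$), with common mean $\mu=\omega(z)=\sum_i p_i\log_2 p_i$; note $-\mu\ge 0$ is the Shannon entropy, the reading of ``$H$'' appropriate to the trace bounds below.

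Next I would run the weak law followed by Chebyshev. Set $s_n=\frac1n(\hat z_1+\dots+\hat z_n)=\frac1n\log_2(\tensor^n\cali{O}_{\omega})$. Theorem~\ref{thm:weak-law} with $k=2$ gives $\Omega(|s_n-\mu|^2)\to 0$ as $n\to\infty$, and the algebraic Chebyshev corollary stated after it supplies, for each $\epsilon>0$, an $n_0$ with $P(|s_n-\mu|>\epsilon)<\epsilon$ for all $n>n_0$, the probability computed in $\Omega_n$. By the distribution-function construction of Section~\ref{sec:algebra}, the event $\{|s_n-\mu|\le\epsilon\}$ is the projection $E=\sum x_{i_1}\tensor\dots\tensor x_{i_n}$ over those atoms on which the eigenvalue $\sigma=\frac1n\sum_k\log_2 p_{i_k}$ of $s_n$ satisfies $\mu-\epsilon\le\sigma\le\mu+\epsilon$. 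As $t\mapsto 2^{nt}$ is strictly increasing and $\tensor^n\cali{O}_{\omega}$ has eigenvalue $2^{n\sigma}$ on the same atom, $E$ is exactly the spectral projection of $\tensor^n\cali{O}_{\omega}$ for the window $[2^{n(\mu-\epsilon)},2^{n(\mu+\epsilon)}]$. Hence $P(2^{n(\mu-\epsilon)}\le\tensor^n\cali{O}_{\omega}\le 2^{n(\mu+\epsilon)})=\Omega_n(E)=1-P(|s_n-\mu|>\epsilon)>1-\epsilon$, which is the first displayed inequality (with $H=\mu$).

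For the counting bounds I would identify the stated $Q$ with this same typical projection $E$: the element $(\epsilon I-|\log_2(\tensor^n\cali{O}_{\omega})+nH|)_+$ --- with $H=-\mu$ and the cut-off read at the natural scale $n$ --- has strictly positive eigenvalue precisely on the atoms comprising $E$, so $E$ is the identity of the subalgebra it generates, i.e.\ $E=Q$. Since $\tr$ assigns value $1$ to every atomic basis element of $\tensor^n A$, $\tr(Q)=|T_n|$, the number of typical atoms. Each typical atom has $\Omega_n$-value $\prod_k p_{i_k}=2^{n\sigma}\in[2^{n(\mu-\epsilon)},2^{n(\mu+\epsilon)}]$, so summing over $T_n$ and using $\Omega_n(Q)\le\Omega_n(\unit)=1$ gives $|T_n|\,2^{n(\mu-\epsilon)}\le 1$, while using $\Omega_n(Q)>1-\epsilon$ from the previous step gives $|T_n|\,2^{n(\mu+\epsilon)}>1-\epsilon$. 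Rearranging,
\[(1-\epsilon)\,2^{-n(\mu+\epsilon)}\le\tr(Q)\le 2^{-n(\mu-\epsilon)},\]
which is the asserted pair of bounds with the entropy $H=-\mu$.

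The mathematical substance is entirely in Theorem~\ref{thm:weak-law}; the rest is a translation dictionary. The step I expect to demand the most care is the functional-calculus passage: one must check that in an abelian $\cstar$ algebra $x\mapsto 2^x$ (equivalently $\log_2$ on strictly positive elements) is order preserving and sends spectral projections to spectral projections, so that the probabilistic event $\{|s_n-\mu|\le\epsilon\}$, the trace count of typical atoms, and the operator inequality $2^{n(\mu-\epsilon)}\le\tensor^n\cali{O}_{\omega}\le 2^{n(\mu+\epsilon)}$ all designate the single projection $E=Q$. The remaining issues --- discarding the null atoms at the outset and fixing the cut-off scale in the definition of $Q$ --- are routine.
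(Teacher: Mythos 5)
Your proposal is correct and follows essentially the same route as the paper's own argument: factor $\tensor^n\cali{O}_{\omega}$ as a product of commuting i.i.d.\ copies of $\cali{O}_{\omega}$, take $\log_2$ to obtain a sum of $\Omega$-independent identically distributed self-adjoint elements, and invoke Theorem~\ref{thm:weak-law} together with the algebraic Chebyshev corollary. Your additional work --- the explicit identification of the typical spectral projection with $Q$, the counting bounds via $\tr$, and the resolution of the sign/scale inconsistencies in the statement ($H=-\mu$ in the trace bounds, cut-off $n\epsilon$ rather than $\epsilon$) --- goes beyond the paper's sketch but is consistent with its intent.
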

\commentout{
The function $\log{x}\equiv \log_2{x}\;(=\ln{x}/\ln{2})$ is defined for strictly positive elements of a $\cstar$ algebra. We extend the definition to all non-zero $x\geq 0$. Let $\{y_i\}$ be a  atomic basis in an abelian $\cstar$ algebra. Let $y=\sum_i a_iy_i$ with $a_i\geq 0$. Then define $\log_2{y}=\sum_i b_i y_i$ where $b_i=\log{a_i}$ if $a_i>0$ and 0 otherwise. This definition implies that some standard properties of $\log$ are no longer true (e.g.\ $2^{\log{x}}\neq x$). But in the present context it gives the correct result when we take expectation values as in the formulas in the theorem. 

A somewhat longer but mathematically better justified route is to ``renormalize'' the state. Thus if $\omega(x_i)=0$ for $k$ indices we define $\omega'(x_i)=\delta$ where $\delta$ is arbitrarily small but positive and $\omega'(x_j)=\omega(x_j)-k\delta$ where $\omega'(x_j)>k\delta$. If we can prove the theorem now for $\omega'$ and since the relations are valid in the limit $\delta\rightarrow 0$ then we are done. We will not take this path but implicitly assume that the probabilities are positive.} Note that the element $Q$ is a projection on the subalgebra generated by $(\epsilon I-|\log_2(\tensor^n\cali{O}_\omega)-nH|)_+$. It corresponds to the set of strings whose probabilities are between $2^{-nH-\epsilon}$ and $2^{-nH+\epsilon}$. The integer $\tr(Q)$ is simply the cardinality of this set. 
\commentout{
\begin{proof}[Proof of the theorem]
First note that $\log{ab}=\log a+\log b$ for elements $a,b\geq 0$ in $A$. We can write $\tensor^n\cali{O}_{\omega}=X_1X_2\dotsb X_n$ where $X_i=\unit\tensor\unit\tensor\dotsm\tensor\cali{O}_{\omega}\tensor\unit\tensor\dotsm\tensor\unit$ with $\log{\cali{O}_{\omega}}$ in the $i$th place. The fact that $\Omega_n$ is a product state on $\tensor^n A$ (corresponding to a source whose successive outputs are independent) implies that $X_i$ are independent and identically distributed. We can now apply the corollary to Theorem \ref{thm:weak-law} yielding 
\(P(|\log{(\tensor^n\cali{O}_{\omega})} -\Omega_n(\log{X_1})|>\epsilon)=P(|\log{(\tensor^n\cali{O}_{\omega})} -\omega(\log{(\cali{O}_{\omega})})|>\epsilon)\). 
\end{proof}
}
%\subsection{Communication Channels}

We now come to the most important part of the communication model: the {\em channel}. 
The original paper of Shannon characterized channels by a transition probability function. We will consider only (discrete) memoryless channel (DMS). A  DMS channel has an input alphabet $X$ and output alphabet $Y$ and 
\commentout{
a sequence of random functions $\phi_n: X^n\rightarrow Y^n$. The latter are characterized by probability distributions $p_n(y^{(n)}|x^{(n)})$, the interpretation being: $\phi_n(x^{(n)})=y^{(n)}$ with conditional probability  $p_n(y^{(n)}|x^{(n)})$. Note that the distribution depends on the entire history. We say that such a channel has (infinite) memory. A channel has finite memory if there is an integer $k\geq 0$ such that if $x^{(n)}=x_nx_{n-1}\cdots x_{n-k+1}\dotsc x_1$ then $p_n(y^{(n)}|x^{(n)})= p_n(y^{(n)}|x'^{(n)})$ for any string $x_n'$ of length $n$ such that $x_n'=x_n, \dotsc, x_{n-k+1}'=x_{n-k+1}$. That is, the probability distribution depends on the most recent $k$ symbols seen by the channel. A channel is {\em memoryless} if $k=1$. Since we will be dealing mostly with discrete memoryless channels (DMS) this property will be tacitly assumed unless stated otherwise. In the memoryless case it is easy to show the simple form of transition probabilities 
\beq
\begin{split}
&p_n(y^{(n)}|x^{(n)})=p_n(y_1\dotsc y_n|x_1\dotsc x_n)\\
& =p(y_1|x_1)p(y_2|x_2)\dotsb p(y_n|x_n)\\ 
\end{split}
\eeq
This motivates us to define the}
a {\em channel transformation matrix} $C(y_j|x_i)$ with $y_j\in Y$ and $x_i\in X$. Since the matrix $C(y_j|x_i)$  represents the probability that the channel outputs $y_j$ on input $x_i$ we  have $\sum_j C(y_j|x_i)=1$ for all $i$: $C(ij)=C(y_j|x_i)$ is {\em row stochastic}. This is the standard formulation.  \cite{Ciszar,CoverT}. We now turn to the algebraic formulation. 

\noindent
{\bf Definition.} A DMS channel $\cali{C}=\{X,Y,C\}$ where $X$ and $Y$ are abelian $\cstar$ algebras of dimension $m$ and $n$ respectively and $C: Y \rightarrow X$ is a unital positive map. The algebras $X$ and $Y$ will be called the input and output algebras of the channel respectively. Given a state $\omega$ on $X$ we say that $(X,\omega)$ is the input source for the channel. 

\noindent
Sometimes we write the entries of $C$ in the more suggestive form $C_{ij}=C(y_j|x_i)$ where $\{y_j\}$ and $\{x_i\}$ are atomic bases for $Y$ and $X$ respectively. Thus $C(y_j)=\sum_i C_{ij}x_i= \sum_i C(y_j|x_i)x_i$. Note that in our notation $C$ is an $m\times n$ matrix. Its transpose $C^T_{ji}=C(y_j|x_i)$ is the channel matrix in the standard formulation. We have to deal with the transpose because the channel is a map {\em from} the output alphabet to the input alphabet. This may be counterintuitive but observe that any map $Y\rightarrow X$ defines a unique dual map $\cali{S}(X)\rightarrow \cali{S}(Y)$, on the respective state spaces. Informally, a channel transforms a probability distribution on the input alphabet to a distribution on the output. 
\commentout{
Let us note that in case of abelian algebras every positive map is guaranteed to be {\em completely positive} \cite{Tak1}. This is no longer true in the non-abelian case. Hence for the quantum case completely positivity has to be explicitly imposed on (quantum) channels.}
We characterize a channel by input/output algebras (of observables) and a positive map. Like the source output we now define a useful quantity called {\em channel output}. Corresponding to the atomic basis $\{y_i\}$ of $Y$ let $\tensor^k y_{i(k)}$ be an atomic basis in $\tensor^n Y$. Here $i(k)=(i_1i_2\dotsc i_k)$ is a multi-index. Similarly we have an atomic basis $\{\tensor^k x_{j(k)}\}$ for $\tensor^k X$. The level-$k$ channel output is defined to be 
%\beq\label{eq:chOutput}
\(O^k_C = \sum_{i(k)} y_{i(k)}\tensor C^{(k)}(y_{i(k)})  \).
%\eeq
Here $C^{(k)}$ represents the channel transition probability matrix on the $k$-fold tensor product corresponding to strings of length $k$. In the DMS case it is simply the $k$-fold tensor product of the matrix $C$. The channel output defined here encodes most important features of the communication process. First, given the input source function  \(\cali{I}_{\omega^k}=\sum_i\omega^k(x_{i(k)})x_{i(k)}\) the output source function is defined by 
%\beq \label{eq:chOutput2}
%\begin{split}
\(\cali{O}_{\tilde{\omega}^k} = I\tensor \tr_{\tensor^kX}((\unit \tensor \cali{I}_{\omega^k})O^k_c)
=\sum_i \sum_j C(y_{i(k)}|x_{j(k)})\omega^k(x_{j(k)})y_{i(k)}\). 
%\end{split}
%\eeq
Here, the state $\tilde{\omega}^k$ on the output space $\tensor^k Y$ can be obtained via the dual $\tilde{\omega}^k(y)=\tilde{C}^k(\omega^k)(y)=\omega^k(C^k(y))$. The formula above is an alternative representation which is very similar to the quantum case. The {\em joint output} of the channel can be considered as the combined output of the two terminals of the channel. Thus the joint output
\beq \label{eq:chOutputJoint}
\begin{split}
&\cali{J}_{\tilde{\Omega}^k} = (\unit \tensor \cali{I}_{\omega^k})O^k_C=\sum_{ij} \Omega^k(y_{i(k)}\tensor x_{j(k)})y_{i(k)}\tensor x_{j(k)},  \\
& \Omega^k(y_{i(k)}\tensor x_{j(k)})\equiv C(y_{i(k)}|x_{j(k)})\omega(x_{j(k)})
\end{split}
\eeq
Let us analyze the algebraic definition of channel given above. For simplicity of notation, we restrict ourselves to level 1. The explicit representation of channel output is 
\(\sum_i y_i\tensor \sum_j C(y_i|x_j)x_j \)
We interpreted this as follows: if on the channel's out-terminal $y_i$ is observed then the input could be $x_j$ with probability \(C(y_i|x_j)\omega(x_j)/\sum_jC(y_i|x_j)\omega(x_j)\). Now suppose that for a fixed $i$  $C(y_i|x_j)=0$ for all $j$ except one say, $j_i$. Then on observing $y_i$ at the output we are certain that the the input is $x_{j_i}$. If this is true for all values of $y$ then we have an instance of a lossless channel. Given $1\leq j\leq n$ let $d_j$ be the set of integers $i$ for which $C(y_i|x_j)> 0$. If the channel is lossless then $\{d_j\}$ form a partition of the set $\{1,\dotsc, m\}$. The corresponding channel output is 
\( O_C= \sum_j \Bigl(\sum_{i\in d_j} C(y_i|x_j)y_i\Bigr)\tensor x_j\). 
At the other extreme is the {\em useless} channel in which there is no correlation between the input and the output. 
To define it formally, consider a channel $\cali{C}=\{X,Y,C\}$ as above. The map $C$ induces a map $C': Y\tensor X\rightarrow X$ defined by $C'(y\tensor x)=xC(y)$. Given a state $\omega$ on $X$ the 
dual of the map $C'$ defines a state $\Omega_C$ on $Y\tensor X$: \(\Omega_C(y\tensor x)=\omega(C'(y\tensor x))=C(y|x)\omega(x)\). We call $\Omega_C$ the joint (input-output) state of the channel. A channel is 
useless if $Y$ and $X$ (identified as $Y\tensor \unit$ and $\unit \tensor X$ resp.) are $\Omega_C$-independent. It is easily shown that: {\em a channel $\cali{C}=\{X,Y,C\}$ with input source $(X,\omega)$ is useless iff the matrix $C_{ij}=C(y_j|x_i)$ is of rank 1}.  
%\begin{lem}
%\end{lem}
The algebraic version of the channel coding theorem assures that it is possible to approximate, in the long run, an arbitrary channel (excepting the useless case) by a lossless one. 

\begin{thm}[Channel coding]\label{thm:ch_coding}
Let $\cali{C}$ be a channel with input algebra $X$ and output algebra $Y$. Let \(\{x_i\}_{i=1}^n\text{ and } \{y_j\}_{j=1}^m\) be atomic bases for $X$ and $Y$ resp. Given a state $\omega$ on $X$, if the channel is not useless then  for each $k$ there are subalgebras \(Y_k\subset \tensor^k Y, X_k\subset \tensor^k X\), a map $C_k: Y_k\rightarrow X_k$ induced by $C$ and  a lossless channel $L_k: Y_k\rightarrow X_k$ such that 
\[\lim_{k\rightarrow \infty} \Omega(|O_{C_k}-O_{L_k}|) = 0 \text{ on } T_k=Y_k\tensor X_k\]
Here $\Omega=\tensor^{\infty}\Omega_C$ and on $\tensor^k Y\tensor \tensor^k Y$ it acts as $\Omega^k=\tensor^k\Omega_C$ where $\Omega_C$ is the state induced by the channel and a given input state $\omega$. Moreover, if $r_k=\text{dim}(X_k)$ then $R=\frac{\log{r_k}}{k}$, called transmission rate, is independent of $k$. 
\end{thm}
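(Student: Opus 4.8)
The plan is to transcribe the Shannon--Feinstein channel coding theorem into the present algebraic language: ``subset of input words'' becomes ``subset of an atomic basis'', ``decoding region'' becomes ``orthogonal projection'', and counting becomes the functional $\tr$. Write $\Omega_C$ for the joint state on $Y\tensor X$, $\tilde\omega=\tilde C(\omega)$ for the induced output state on $Y$, and put $H_X=H(\omega)$, $H_Y=H(\tilde\omega)$, $H_{YX}=H(\Omega_C)$ and $I=H_X+H_Y-H_{YX}$, the algebraic mutual information of the channel in the state $\omega$. When $I=0$ --- which includes every useless channel (rank-one $C$) as well as degenerate inputs --- the assertion is trivial with $X_k=\comp$; so assume $I>0$ and fix a target rate $R$ with $0<R<I$.

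\emph{Typical projections and conditional typicality.} Fix $\epsilon>0$. The AEP (Theorem~\ref{thm:AEP}) applied to $(X,\omega)$, $(Y,\tilde\omega)$ and $(Y\tensor X,\Omega_C)$ gives, for each $k$, typical projections $P^X_k\in\tensor^kX$, $P^Y_k\in\tensor^kY$ and $P^{YX}_k\in\tensor^kY\tensor\tensor^kX$, each of $\Omega^k$-mass at least $1-\epsilon$, with $\tr P^X_k\le 2^{k(H_X+\epsilon)}$, $\tr P^Y_k\le 2^{k(H_Y+\epsilon)}$ and $\tr P^{YX}_k\le 2^{k(H_{YX}+\epsilon)}$. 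For a typical input atom $x_{j(k)}$ (one in the support of $P^X_k$), set $d_{j(k)}=\{\,i(k): y_{i(k)}\tensor x_{j(k)}\in\operatorname{supp} P^{YX}_k\,\}$; since the channel is memoryless, applying the AEP to the conditional law $C^{(k)}(\,\cdot\mid x_{j(k)})$ yields $\beta_{j(k)}:=\sum_{i(k)\in d_{j(k)}}C^{(k)}(y_{i(k)}\mid x_{j(k)})\ge 1-\epsilon$ and $|d_{j(k)}|\le 2^{k(H_{YX}-H_X+\epsilon)}$.

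\emph{Packing and the four objects.} A Feinstein-type maximal packing --- the classical greedy construction with ``cardinality'' read as $\tr$ --- produces, for $k$ large, typical input atoms $x_{j_1(k)},\dots,x_{j_{r_k}(k)}$ with $r_k=\lfloor 2^{kR}\rfloor$ whose sets $d_{j_l(k)}$ are pairwise disjoint, i.e.\ whose output projections $q_l:=\sum_{i(k)\in d_{j_l(k)}}y_{i(k)}$ are mutually orthogonal (they fit because the disjoint $d$'s sit inside a set of trace $\le 2^{k(H_Y+\epsilon)}$ while each has size $\le 2^{k(H_{YX}-H_X+\epsilon)}$, and $R<I$). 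Let $X_k$ be the subalgebra with atomic basis $\{x_{j_l(k)}\}_{l=1}^{r_k}$ (so $\dim X_k=r_k$; adjoin a garbage atom $x_0$ if one wants $X_k\ni\unit$), let $Y_k$ be the subalgebra generated by the atoms of $\bigcup_l d_{j_l(k)}$ together with the complement $y_0=\unit-\sum_l q_l$, let $C_k:Y_k\to X_k$ be $C^{(k)}$ restricted to $Y_k$ followed by the $\omega^k$-conditional expectation onto $X_k$, and let $L_k:Y_k\to X_k$ be the lossless channel with decoding partition $\{d_{j_l(k)}\}_l\cup\{d_0\}$ ($d_0$ indexing $y_0$), given on its support by $L_k(y_{i(k)}\mid x_{j_l(k)})=C^{(k)}(y_{i(k)}\mid x_{j_l(k)})/\beta_{j_l(k)}$ and $L_k(y_0\mid x_0)=1$. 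Since $O_{C_k}-O_{L_k}=\sum_{i,j}(C_k-L_k)(y_i\mid x_j)\,y_i\tensor x_j$ is diagonal in the atomic basis of $T_k=Y_k\tensor X_k$, we get $\Omega(|O_{C_k}-O_{L_k}|)=\sum_{i,j}|(C_k-L_k)(y_i\mid x_j)|\,\Omega^k(y_i\tensor x_j)$, and on each codeword the inner sum splits into a renormalization defect $\le 1/\beta_{j_l(k)}-1$ and a leakage contribution $\le 1-\beta_{j_l(k)}$, both $O(\epsilon)$, while the garbage atoms carry $\Omega^k$-weight $O(\epsilon)$ by the AEP; hence $\Omega(|O_{C_k}-O_{L_k}|)=O(\epsilon)$. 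Carrying the construction out with $\epsilon=\epsilon_k\downarrow 0$ forces the limit to be $0$, and $\dim X_k=r_k=\lfloor 2^{kR}\rfloor$ gives $\log r_k/k\to R$, which is by construction the transmission rate (literally constant in $k$ if one restricts to block lengths along which $2^{kR}$ is integral).

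\emph{The main obstacle.} The one substantive step is the packing: that a maximal family of pairwise-orthogonal conditionally-typical output projections has total trace of order $2^{kI}$, so that $\lfloor 2^{kR}\rfloor$ codewords can indeed be found for any $R<I$. The AEP gives only the per-projection dimension bounds; the required \emph{lower} bound on $r_k$ is the genuine coding content and is obtained by the classical random-coding-plus-expurgation argument, or by Feinstein's greedy lemma translated via the dual states $\tilde\omega^k$ and $\tr$ --- the heart of it being that a sub-$2^{kI}$ collection of decoded codewords always leaves room for one more, which is where non-uselessness ($I>0$) is used. The remaining verifications --- that $C_k$ is a unital positive map and that $L_k$ is genuinely lossless (its supports $d_{j_l(k)}$ being disjoint by construction) --- are routine bookkeeping once the codebook is fixed.
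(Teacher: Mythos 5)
Your architecture is the right one, and it is essentially the translation the paper intends: the conference version omits the proof entirely (deferring to the arXiv companion), but its surrounding discussion --- the AEP stated with the projection $Q$ and the bound on $\tr(Q)$, the characterization of lossless channels by a partition $\{d_j\}$ of the output atoms, and the remark that the theorem gives ``convergence in the mean'' --- makes clear that the intended proof is exactly a typical-set/Feinstein argument in which decoding regions become orthogonal projections, cardinalities become $\tr$, and the error criterion becomes $\Omega(|O_{C_k}-O_{L_k}|)$. Your dictionary (typical projections $P^X_k,P^Y_k,P^{YX}_k$, conditional typicality via the memoryless structure, the lossless $L_k$ built from the disjoint decoding sets, and the $O(\epsilon)$ bookkeeping splitting the error into a renormalization defect and a leakage term) is correct, and your reduction of ``not useless'' to $I>0$ is consistent with the paper's state-dependent definition of uselessness as $\Omega_C$-independence of $Y$ and $X$. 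You also prove something slightly stronger than the literal statement (any rate $R<I$), which is harmless.

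The one genuine shortfall is the step you yourself flag: the existence of $\lfloor 2^{kR}\rfloor$ typical input atoms whose conditionally typical output sets $d_{j_l(k)}$ can be made pairwise disjoint (after discarding overlaps) for every $R<I$. This is not bookkeeping --- it is the entire content of the direct coding theorem, and the AEP alone does not deliver it: the dimension counts you quote ($|d_{j(k)}|\le 2^{k(H_{YX}-H_X+\epsilon)}$ inside a set of trace $\le 2^{k(H_Y+\epsilon)}$) only show that at most about $2^{kI}$ such disjoint sets can fit, not that $2^{kR}$ of them actually do. A maximal disjoint family could a priori be much smaller, because greedily discarding every atom of $d_{j(k)}$ that meets an earlier decoding set may destroy too much of its conditional mass. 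The missing lemma is precisely Feinstein's: if fewer than $2^{kR}$ codewords have been chosen, the union of their decoding sets has $\tilde\omega^k$-mass too small to cover the conditionally typical output set of every remaining typical input, so another codeword can be added; the quantitative version needs the joint-typicality estimate $\tilde\omega^k(q)\le 2^{-k(I-3\epsilon)}\,\beta_{j(k)}^{-1}\tr(q)$ relating output-marginal mass to conditional mass, which is an additional consequence of the three AEPs used jointly and is not stated in your proposal. Citing ``random coding plus expurgation'' is a correct pointer, but to close the argument in this framework you must actually carry out one of these counting arguments with $\tr$ and the dual states; until then the proof has a named but unfilled hole at its center. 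The remaining verifications (unitality and positivity of $C_k$, losslessness of $L_k$) are indeed routine.
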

Let us clarify the meaning of the above statements. The theorem simply states that on the chosen set of codewords the channel output of $C_k$ induced by the given channel can be made arbitrarily close to that of a lossless channel $L_k$. Since a lossless channel has a definite decision scheme for decoding the choice of $L_k$ is effectively a decision scheme for decoding the original channel's output when the input is restricted to our ``code-book''. This implies probability of error tends to 0 it is possible to choose a set of ``codewords'' which can be transmitted with high reliability. The proof of the theorem \cite{Patra09} uses algebraic arguments only.  The theorem guarantees ``convergence in the mean'' in the appropriate subspace which implies convergence in probability. For a lossless channel the input entropy $H(X)$ is equal to the mutual information. We may think of this as conservation of entropy or information which justifies the term ``lossless''. Since it is always the case that $H(X)-H(X|Y)=I(X,Y)$ the quantity $H(X|Y)$ can be considered the loss due to the channel. The algebraic version of the theorem serves two primary purposes. It gives us the abelian perspective from which we will seek possible extensions to the non-commutative case. Secondly, the channel map $L$ can be used for a decoding scheme. Thus we may think of a coding-decoding scheme for a given channel as a sequence of pairs $(X_k, L_k)$ as above. 

The coding theorems can be extended to more complicated scenarios like ergodic sources and channels with finite memory. We will not pursue these issues further here. But we are confident that these generalizations can be appropriately formulated and proved in the algebraic framework.  
%\section{Conclusion and preview of the future work}
In the preceding sections we have laid the basic algebraic framework for classical information theory. Although, we often confined our discussion to finite-dimensional algebras corresponding to finite sample spaces it is possible to extend it to infinite-dimensional algebras of continuous sample spaces. 
\commentout{
In this regard, a natural question is: can the algebraic formulation replace  Kolmogorov axiomatics based on measure theory? Naively, the answer is no because the assumption of a norm-compete algebra imposes the restriction that the random variables that they represent must be {\em bounded}. Moreover, the GNS construction implies that the algebraic framework is essentially equivalent to (almost) bounded random variables on a locally compact space. In order to deal with the unbounded case we have to go beyond the normed algebra structures. A possible course of action is indicated in the examples given in section \ref{sec:examples}: via the use of a ``cut-off''. A more general approach would be to consider sequences which converge in a topology weaker than the norm topology to elements of a larger algebra. These and other related issues on foundations are deep and merit a separate investigation.}
These topics will be  investigated in the future in the non-commutative setting.  We will delve deeper into these analogies and aim to throw light on some basic issues like quantum Huffman coding \cite{Braunstein}, channel capacities and general no-go theorems among others, once we formulate the appropriate models. 


\begin{thebibliography}{BFGL00}

\bibitem[Ara75]{araki75}
H.~Araki.
\newblock {Relative entropy of states of von Neumann algebras}.
\newblock {\em Publications of the Research Institute for Mathematical
  Sciences}, 11:173--192, 1975.

\bibitem[BFGL00]{Braunstein}
S.~L. Braunstein, C.~A. Fuchs, D.~Gottesman, and H-K. Lo.
\newblock {A quantum analog of Huffman coding}.
\newblock {\em IEEE Trans. Inf. Th.}, 46:1545, 2000.

\bibitem[BKK07]{Beny}
C.~B\^eny, A.~Kempf, and D.~W. Kribs.
\newblock {Quantum error correction of observables}.
\newblock {\em Phys. Rev. A.}, 76:042303, 2007.

\bibitem[CK81]{Ciszar}
I.~Csisz\^ar and K\''{o}rner.
\newblock {\em {Information Theory}}.
\newblock Academic Press, 1981.

\bibitem[CT99]{CoverT}
T.~M. Cover and Joy.~A. Thomas.
\newblock {\em {Elements of Information Theory}}.
\newblock John Wiley {\&} Sons, 1999.

\bibitem[Key02]{Keyl}
M.~Keyl.
\newblock {Fundamentals of quantum information theory}.
\newblock {\em Phys. Rep.}, 369:531--548, 2002.

\bibitem[KR97]{KR1}
R.~V. Kadison and J.~R. Ringrose.
\newblock {\em {Fundamentals of the Theory of Operator Algebras Vol. I}}.
\newblock American Mathematical Society, 1997.

\bibitem[KW06]{Kretschmann}
D.~Kretschmann and R.~F. Werner.
\newblock {Quantum channels with memory}.
\newblock {\em Phys. Rev. A.}, 72:062323, 2006.

\bibitem[PB]{Patra09}
M.~K. Patra and S.~L. Braunstein.
\newblock An algebraic framework for information theory: Classical information.
\newblock \href{http://arxiv.org/abs/0910.1536}{arXiv:0910.1536}.

\bibitem[Rud87]{Rudin}
W.~Rudin.
\newblock {\em {Real and Complex Analysis}}.
\newblock McGraw-Hill, 3 edition, 1987.

\bibitem[Seg54]{Segal54}
I.~E. Segal.
\newblock {Abstract probability spaces and a theorem of Kolmogoroff}.
\newblock {\em Am. J. Math}, 76(3):721--732, 1954.

\bibitem[Seg60]{Segal60}
I.~E. Segal.
\newblock {A note on the concept of entropy}.
\newblock {\em J. Math. Mech.}, 9:623--629, 1960.

\bibitem[Sha48]{Shannon48}
C.~E. Shannon.
\newblock {A mathematical theory of communication}.
\newblock {\em Bell Sys. Tech. Journal}, 27:379--423,623--656, 1948.

\bibitem[Ume62]{Umegaki4}
H.~Umegaki.
\newblock {Conditional expectation in an operator algebra, IV (Entropy and
  Information)}.
\newblock {\em Kodai Mathematical Seminar Reports}, 14:59--85, 1962.

\bibitem[VDN92]{Voic}
D.~Voiculescu, K.~Dykema, and A.~Nica.
\newblock {\em {Free random variables}}.
\newblock {CRM monograph series}. AMS, 1992.

\end{thebibliography}
\end{document}